\crefname{equation}{}{Equations}
\Crefname{equation}{Eq.}{Equations}
\numberwithin{equation}{section}
\newtheorem{theorem}{Theorem}
\newtheorem{lemma}[theorem]{Lemma}
\newtheorem{proposition}[theorem]{Proposition}
\newcommand\bbR{\mathbb{R}}
\newcommand\bbN{\mathbb{N}}
\newcommand\bbS{\mathbb{S}}
\newcommand\mB{\mathcal{B}}
\newcommand\mE{\mathcal{E}}
\newcommand\mG{\mathcal{G}}
\newcommand\mI{\mathcal{I}}
\newcommand\mJ{\mathcal{J}}
\newcommand\mL{\mathcal{L}}
\newcommand\mS{\mathcal{S}}
\newcommand\mU{\mathcal{U}}
\newcommand\tmu{\tilde{\mu}}
\newcommand{\tlam}{\tilde{\lambda}}
\newcommand\dd{\,\mathrm{d}} 
\newcommand\sffont[1]{{\sf{#1}}}
\newcommand\relu{{\sffont{ReLU}}}
\newcommand\id{{\sffont{id}}}
\newcommand\Convone{{\sffont{Conv1d }}}
\newcommand\Convtwo{{\sffont{Conv2d }}}
\newcommand\NN{\mathrm{NN}}
\newcommand\T{\mathsf{T}}
\newcommand\cnn{\mathrm{cnn}}
\newcommand\reverse[1]{{{\color{gray} }}}
\title{
  Solving Optical Tomography with Deep Learning
}
\date{}
\author{
  Yuwei Fan%
    \thanks{Department of Mathematics, Stanford University, Stanford, CA 94305. 
    Email: {\tt ywfan@stanford.edu}},~~
  Lexing Ying%
    \thanks{Department of Mathematics and ICME, Stanford University, Stanford, CA 94305.
    Email: {\tt lexing@stanford.edu}}
}
\begin{document}
\maketitle

\begin{abstract}
  This paper presents a neural network approach for solving two-dimensional optical tomography (OT)
  problems based on the radiative transfer equation. The mathematical problem of OT is to recover
  the optical properties of an object based on the albedo operator that is accessible from boundary
  measurements. Both the forward map from the optical properties to the albedo operator and the
  inverse map are high-dimensional and nonlinear. For the circular tomography geometry, a
  perturbative analysis shows that the forward map can be approximated by a vectorized convolution
  operator in the angular direction. Motivated by this, we propose effective neural network
  architectures for the forward and inverse maps based on convolution layers, with weights learned
  from training datasets. Numerical results demonstrate the efficiency of the proposed neural
  networks.
\end{abstract}

{\bf Keywords:} Optical tomography; Radiative transfer equation; Albedo operator; 
Inverse problem; Neural networks; Convolutional neural network.

\section{Introduction}\label{sec:intro}
Optical tomography (OT) is a non-invasive method for reconstructing the optical properties of the
medium from boundary measurements with harmless near-infrared light. A typical experiment is to
illuminate a highly-scattering medium by a narrow collimated beam and measure the light on the
surface by an array of detectors \cite{arridge2009optical}. Since it is non-destructive to
biological tissues, OT is of great interest in early tumor diagnosis in medicine, such as in brain
imaging \cite{boas2004diffuse} and breast imaging \cite{fantini2012near}. Other industrial
applications include atmospheric remote sensing \cite{weitkamp2006lidar} and semiconductor etching
\cite{edwards2012optically}, etc. We refer readers to the review paper \cite{arridge2009optical},
the book \cite{bal2019inverse} and references therein for more details of OT.

\paragraph{Background.}
The governing equation of the near-infrared light depends on the spatial scale, ranging from Maxwell
equations at the microscale, radiative transfer equation (RTE) at the mesoscale, and to diffusion
theory at the macroscale \cite{arridge2009optical}. Among them, RTE is the most widely accepted
model for light propagation in tissues. Let $\Omega\subset\bbR^n$ for $n=2$ or $3$ be a bounded
Lipschitz domain and $\bbS^{n-1}$ is the unit sphere in $\bbR^n$. Define $\Gamma_{\pm}=\{(x,v)\in
\partial \Omega\times \bbS^{n-1}\mid \pm v\cdot \nu(x)> 0\}$ with $\nu(x)$ to be the outward unit
normal to $\partial \Omega$ at $x$. The specific intensity $\Phi(x,v)$, defined as the intensity of
the light at the position $x$ in the direction $v$, satisfies the following RTE
\begin{equation} \label{eq:rte}
  \begin{aligned}
    v\cdot\nabla \Phi(x,v) + \mu_t(x)\Phi(x,v) = 
    \mu(x) \int_{\bbS^{d-1}}\sigma(v\cdot v')\Phi(x,v')\dd v' + Q(x, v),
    \quad & (x, v) \in \Omega \times \bbS^{n-1}, \\
    \Phi(x,v) = F(x,v), \quad & \text{ on }\Gamma_-.
  \end{aligned}
\end{equation}
The scattering phase function $\sigma$ satisfies $\int_{\bbS^{n-1}}\sigma(v\cdot v')\dd
v=1$. $Q(x,v)$ is the source inside $\Omega$ and $F(x,v)$ is the boundary condition specified at
$\Gamma_-$. In this paper, the internal light source is assumed to be absent, i.e., $Q(x, v) = 0$.
The transport coefficient $\mu_t(x)=\mu_a(x)+\mu(x)$ measures the total absorption, including the
physical absorption quantified by the term $\mu_a(x)$ and the scattering phenomenon quantified by
the term $\mu(x)$. Here we focus on the reconstruction of the scattering coefficient $\mu(x)$ under
the assumption that $\mu_a$ is a known constant.

The scattering phase function $\sigma(v\cdot v')$ describes the probability for a photon entering a
scattering process at the direction of propagation $v$ to leave this process at the direction
$v'$. The most common phase function in OT is the Henyey-Greenstein scattering function
\cite{HGscattering}
\begin{equation}\label{eq:HG}
  \sigma(v\cdot v') = \frac{1}{|\bbS^{n-1}|}
  \frac{1-g^2}{(1+g^2-2g v\cdot v')^{n/2}}.
\end{equation}
The parameter $g\in (-1, 1)$ defines the shape of the probability density. The case $g=0$ indicates
that the scattering is almost isotropic, whereas the value of $g$ close to $1$ indicates the
scattering is primarily a forward directed. A typical value in biological tissue is $g=0.9$.

The boundary condition in \eqref{eq:rte} guarantees the uniqueness of solutions of the RTE
\cite{case1963existence}.  In most applications, $F(x,v)$ is either a delta function (in $v$) at
direction $v=-\nu(x)$ or an angular-uniform illumination source. In both cases, $F(x,v)$ can be
written as an angular independent function $f(x) h(\nu(x)\cdot v)$ for some fixed distribution
$h(\cdot)$.

The measurement on the boundary can be angular dependent or independent. Here we focus on the
angular independent case, where the measurable quantity is given by
\begin{equation}\label{eq:measure}
  b(x) \equiv \mB\Phi(x) \equiv \int_{v\cdot\nu(x)>0} v\cdot\nu(x) \Phi(x, v)\dd v.
\end{equation}
The albedo operator is defined as
\begin{equation}\label{eq:albedo}
  \Lambda: H^k(\partial \Omega) \to H^{-k}(\partial \Omega), \quad
  f(x)\mid_{\partial \Omega} \to b(x)\mid_{\partial \Omega},
\end{equation}
where $k>2+n/2$. We refer the readers to \cite{zhao2018instability} for more details of the albedo
operator and the spaces $H^{\pm k}(\partial \Omega)$.

For a given $\mu(x)$, the albedo operator is a linear map, hence there exists a $\mu$-dependent
distribution kernel $\lambda (r, s)$ for $r, s\in \partial \Omega$ such that
\begin{equation}
  (\Lambda f )(r) = b(r) = \int_{\partial \Omega} \lambda (r, s) f (s) \dd S(s).
\end{equation}
The forward problem for the albedo operator is that, given the scattering coefficient $\mu$, to
compute the kernel $\lambda (r, s)$, i.e., $\mu\to\lambda $.  The inverse problem, which is central
to OT, is to recover the optical scattering coefficient $\mu$ in $\Omega$ based on the observation
data. Typically, the observation data is a collection of pairs $(f, \Lambda f)$ of the boundary
illumination source $f $ and the measurable quantity $\Lambda f$.  When the observation data is
sufficient, it is reasonable to assume that the kernel $\lambda $ is known and hence the inverse
problem is to recover $\mu$ from $\lambda $, i.e., $\lambda\to\mu$. The solvability of the inverse
problem has been well studied \cite{choulli1996reconstruction, stefanov2003optical,
  bal2008stability, bal2019inverse}. Since the measurements are angularly integrated, the inverse
problem is often sensitive to noise \cite{bi2015image, zhao2018instability}. For example, in the
diffusion limit where RTE can be approximated by a diffusion equation, the inverse problem is
considered ill-conditioned due to the elliptic nature \cite{arridge2009optical} of the equation.  In
other cases, the inverse problem can suffer H{\"o}lder instability due to its transport nature (see
\cite{zhao2018instability} for example).

From a computational perspective, both the forward and inverse problems associated with the albedo
operator \cref{eq:albedo} are numerically challenging. For the forward problem, since the unknown
field $\Phi(x,v)$ is a $(2n-1)$-dimensional function in both the space $x$ and the direction $v$,
direct solution of RTE is quite expensive even for the two-dimensional case. For OT problems, the
situation is worse since in each round of measurements the number of RTE solves is equal to the
number of light sources. For the inverse problem, the map $\lambda\to\mu$ is often numerically
unstable \cite{bi2015image,leng2015reconstruction} due to the ill-posedness and the measurement
noise. In order to avoid instability, an application-dependent regularization term is often required
in order to stabilize the inverse problem; see, for instance, \cite{hanke1997regularizing,
  chan1999nonlinear, kaipio1999inverse, gao2010multilevel, bi2015image}. Algorithmically, the
inverse problem is usually solved with iterative methods
\cite{hanke1997regularizing,haber2000optimization, gao2010multilevel, tong2018rte}, which often
require a significant number of iterations.

\paragraph{Contributions.}

In the recent years, deep neural networks (DNNs) have been very effective tools in a variety of
contexts and have achieved great successes in computer vision, image processing, speech recognition,
and many other artificial intelligence applications \cite{Hinton2012, Krizhevsky2012,
  goodfellow2016deep, MaSheridan2015, Leung2014, SutskeverNIPS2014, leCunn2015, SCHMIDHUBER2015}.
More recently, DNNs have been increasingly used in the context of scientific computing, particularly
in solving PDE-related problems
\cite{khoo2017solving,berg2017unified,han2018solving,fan2018mnn,Araya-Polo2018, Raissi2018,
  kutyniok2019theoretical, feliu2019meta}. First, since neural networks offer a powerful tool for
approximating high-dimensional functions \cite{cybenko1989approximation}, it is natural to use them
as an ansatz for high-dimensional PDEs \cite{rudd2015constrained, carleo2017solving, han2018solving,
  khoo2019committor, weinan2018deep}. A second main direction focuses on the low-dimensional
parameterized PDE problems, by using the DNNs to represent the nonlinear map from the
high-dimensional parameters of the PDE solution
\cite{long2018pde,han2017deep,khoo2017solving,fan2018mnn,fan2018mnnh2,fan2019bcr,li2019variational,bar2019unsupervised}.
Applying DNNs to inverse problems
\cite{khoo2018switchnet,hoole1993artificial,kabir2008neural,adler2017solving,lucas2018using,tan2018image,fan2019eit,raissi2019physics}
can be viewed as a particularly important case of this direction.

This paper applies the deep learning approach to the two-dimensional OT problems by representing
both the forward and inverse maps using neural network architectures.  The starting point of the new
architectures is reformulating RTE into an integral form, which allows for writing out explicitly
the forward map $\mu\to\lambda $. By applying a perturbative analysis on the forward map followed by
reparameterization, we find the forward map contains one-dimensional convolution in the angular
direction for the circular tomography geometry.  This observation motivates to represent the forward
map from 2D coefficient $\mu$ to 2D data $\lambda $ by a \emph{one-dimensional} convolution neural
network (with multiple channels). Following the idea of the back-projection method
\cite{feng2007levenberg}, the inverse map $\lambda\to\mu$ can be approximated by reversing the
architecture of the forward map followed with a simple two-dimensional neural network. For the test
problems being considered, the resulting neural networks have a relatively small number of
parameters, thanks to the convolutional structure. This rather small number of parameters allows for
rapid and accurate training, even on rather limited data sets, which is friendly for OT problems as
solving RTE is computationally quite expensive.

\paragraph{Organization.}
This rest of the paper is organized as follows. The mathematical background on the albedo operator
is studied in \cref{sec:math}. The design and architecture of the DNNs of the forward and inverse
maps are discussed in \cref{sec:nn}. Numerical tests are presented in \cref{sec:num}. 

\section{Mathematical analysis of the albedo operator}\label{sec:math}

The goal of this section is to make the relationship between the scattering field $\mu(x)$ and the
kernel $\lambda(r,s)$ of the albedo operator more explicit. The first step is to reformulate RTE as
an equivalent integral equation \cite{case1963existence, egger2014lp}. Denote by
\begin{equation}\label{eq:mJ}
  \mJ F(x, v) = \exp\left( -\int_0^t\mu_t(x-\tau v)\dd \tau \right) F(x-tv,v)
\end{equation}
the extension of boundary values, where $t(x, v)$ is the distance of a photon traveling from $x$ to
the domain boundary along the direction $-v$, i.e.,
\begin{equation} \label{eq:t}
  t(x, v) = \sup\{\tau: x-sv\in \Omega \text{ for } 0\leq s< \tau\}
\end{equation}
and $(x-t(x,v)v, v)\in\Gamma_{-}$. Introduce also the lifting operator
\begin{equation}\label{eq:mL}
  \mL Q(x, v) = \int_0^t\exp\left( -\int_0^\tau\mu_t(x-sv)\dd s \right) Q(x-\tau v, v)\dd \tau,
\end{equation}
and the scattering operator 
\begin{equation}\label{eq:mS}
  \mS \Phi(x,v) = \mu(x)\int_{\bbS^{n-1}}\sigma(v'\cdot v)\Phi(x, v')\dd v'.
\end{equation}
Direct calculations verify that
\begin{equation} \label{eq:operator_relation}
  \begin{aligned}
    (v\cdot\nabla+\mu_t) \mJ F &= 0,\\
    (v\cdot\nabla+\mu_t) \mL Q &= Q,\quad \mL Q\mid_{\Gamma_-}=0.
  \end{aligned}
\end{equation}
This indicates that the extension of the boundary value $\mJ F$ lies in the kernel of the transport
operator $v\cdot\nabla + \mu_t$ and the lifting operator is the right inverse of the transport
operator. Noticing that the internal source vanishes ($Q=0$), one can write RTE equivalently in an
integral form \cite{case1963existence}
\begin{equation}\label{eq:integral}
  \Phi = \mL \mS \Phi + \mJ F,
\end{equation}
which is a Fredholm integral equation of the second kind. The existence and uniqueness of the
integral equation is well understood \cite{case1963existence, LionsVol6} and inverting
\cref{eq:integral} results in
\begin{equation}\label{eq:solution}
  \Phi = (\mI - \mL\mS)^{-1}\mJ F,
\end{equation}
where $\mI$ is the identity operator.

In order to better understand the relationship between the scattering coefficient and the solution,
we perform a perturbative analysis for \cref{eq:solution}. Notice that all the operators $\mL$,
$\mS$ and $\mJ$ depend the scattering coefficient $\mu$ either directly or implicitly through
$\mu_t$. Denote the background of the scattering coefficients by $\mu_0$ and introduce the
perturbation
\[
\tmu \equiv \mu-\mu_0.
\]
Here we assume that both $\mu_0$ and $\mu_a$ are constant. The background of the total absorption
coefficient is then $\mu_{t,0}\equiv \mu_a+\mu_0$. In order to carry out the perturbative analysis,
we expand the operators $\mL, \mJ, \mS$ into terms of different orders of $\tmu$:
\begin{equation}
  \mL = \mL_0 + \mL_1 + \ldots, \quad
  \mJ = \mJ_0 + \mJ_1 + \ldots, \quad
  \mS = \mS_0 + \mS_1 + \ldots 
\end{equation}
where the {\em background operators} $\mL_0$, $\mS_0$ and $\mJ_0$ are independent of $\tmu$ while
$\mL_1$, $\mS_1$ and $\mJ_1$ are all linear in $\tmu$. With these new notations, \cref{eq:solution}
can be reformulated as
\begin{equation}
  \label{eq:Phiexp}
  \Phi = (\mI - \mL_0\mS_0-\mL_1 \mS_0- \mL_0\mS_1 - \ldots)^{-1}(\mJ_0+\mJ_1 + \ldots) F
\end{equation}
where $\ldots$ stands for higher order terms in $\tmu$.  Let us introduce
$\mE_1=\mL_1\mS_0+\mL_0\mS_1$, which is also first order in $\tmu$. When $\tmu$ is sufficiently
small, one can expand $(\mI - \mL_0\mS_0-\mL_1\mS_0-\mL_0\mS_1 - \ldots)^{-1}
=(\mI-\mL_0\mS_0-\mE_1-\ldots)^{-1} $ via a Neumann series
\begin{equation}
  (\mI - \mL_0\mS_0-\mE_1- \ldots)^{-1} = \mG_0 + \mG_0 \mE_1 \mG_0 + \ldots.  
\end{equation}
where $\mG_0 = (\mI -\mL_0\mS_0)^{-1}$. Putting this back in \eqref{eq:Phiexp} and keeping only the
terms linear in $\tmu$, we conclude that the solution of RTE is approximated by
\begin{equation}\label{eq:approximate}
  \Phi \approx (\mG_0\mJ_0 + \mG_0\mJ_1 + \mG_0\mE_1\mG_0\mJ_0)F.
\end{equation}
Combining this with the measurement quantity \cref{eq:measure} results in 
\begin{equation}
  b = \mB\Phi = \mB(\mI - \mL\mS)^{-1}\mJ F
  \approx\mB (\mG_0\mJ_0 + \mG_0\mJ_1 + \mG_0\mE_1\mG_0\mJ_0)F.
\end{equation}
By introducing $b_0 = b\mid_{\mu=\mu_0}$, the boundary measurement obtained with the background
scattering coefficient $\mu_0$, it is equivalent to focus on the difference $b - b_0$. This is known
as \emph{difference imaging} in medical applications \cite{arridge2009optical} and the formula for
the difference is
\begin{equation}\label{eq:linearized}
  b - b_0 =
  \mB(\mI - \mL\mS)^{-1}\mJ F
  - \mB(\mI - \mL_0\mS_0)^{-1}\mJ_0 F
  \approx \mB \mG_0\mJ_1 F + \mB\mG_0\mE_1\mG_0\mJ_0 F.
\end{equation}

In practical applications, the boundary source can be represented as 
\begin{equation}
  F(x, v) = f(x) h(\nu(x)\cdot v).
\end{equation}
For example, if the boundary source is a laser, $h(\nu(x)\cdot v) = \delta(\nu(x)\cdot v-1)$; if the
source is angular independent, then $h(\nu(x)\cdot v) = 1 / |\bbS^{n-1}|$. Hence, the difference of
the albedo operator \cref{eq:albedo} applied to $f$ is
\begin{equation}
  b-b_0 = \left(\Lambda - \Lambda_{0}\right) f  = \mB(\mI - \mL\mS)^{-1}\mJ h f  
  - \mB(\mI - \mL_0\mS_0)^{-1}\mJ_0 h f  
  \approx 
  \mB \mG_0\mJ_1 h f  + \mB\mG_0\mE_1\mG_0\mJ_0h f .
\end{equation}
By setting $f(x)$ to be delta sources, one can extract from $b-b_0$ the kernel
\[
\tlam \equiv \lambda-\lambda_{0}
\]
of the difference albedo operator $\Lambda - \Lambda_{0}$.  In order to see $\tlam$ more explicitly,
denote the distribution kernel of the operator $\mG_0$ by $G_0(x, v, x', v')$, i.e., $\mG_0F(x, v) =
\int_{\Omega\times\bbS^{n-1}} G_0(x, v, x', v')F(x', v')\dd x'\dd v'$, and the distribution kernel
of the operator $\mE_1$ by $E_1(x, v, x', v')$.  By defining the operator $\beta[\eta](x, y) =
|x-y|\int_{0}^1\eta(x+\tau(y-x))\dd\tau$ for any function $\eta(x)$, the operator $\mB\mG_0\mJ$ can be 
represented as 
\begin{equation}
  \mB\mG_0\mJ F(x_r) = \int_{\nu(x_r)\cdot v>0}\hspace{-30pt}\nu(x_r)\cdot v \int_{\Omega}
  \int_{\partial \Omega} G_0\left(x_r, v, x, \widehat{x-x_s}\right) \exp(-\beta[\mu_t](x_s, x) )
  F(x_s, \widehat{x-x_s} )\dd v\dd x\dd x_s,
\end{equation}
where $\hat{x} = \frac{x}{|x|}$. Using the approximation
\begin{equation}
  \begin{aligned}
    \exp(-\beta[\mu_t](x, y)) &= \exp\left(-\beta[\mu_{t,0}](x,y)\right)
    \exp\left(-\beta[\tmu](x,y)\right)\\
                              & \approx \exp\left(-\beta[\mu_{t,0}](x,y)\right)
                              \left(1 - \beta[\tmu](x,y)\right),
  \end{aligned}
\end{equation}
from $\tmu = \mu - \mu_0 = \mu_t - \mu_{t,0}$, the kernel of the first term $\mB\mG_0\mJ_1$ is 
\begin{equation}\label{eq:d1}
  d_1(x_r, x_s) = 
  -\int_{\nu(x_r)\cdot v>0}\hspace{-30pt}\nu(x_r)\cdot v
  \int_{\Omega} G_0\left(x_r, v, x, \widehat{x-x_s}\right) 
  \exp\left(-\beta[\mu_{t,0}](x_s, x) \right)
  \beta[\tmu](x_s, x)
  h(\nu(x_s)\cdot \widehat{x-x_s} )\dd v\dd x,
\end{equation}
which is linear in $\tmu$ through $\beta[\tmu]$. Similarly, the kernel of the second term
$\mB\mG_0\mE_1\mG_0\mJ_0$ can be approximated by
\begin{equation}  \label{eq:d2}
  \begin{aligned}
    d_2(x_r, x_s) &= 
    \int_{\nu(x_r)\cdot v>0}\hspace{-30pt}\nu(x_r)\cdot v
    \int_{\Omega^3}\int_{(\bbS^{n-1})^2}
    G_0(x_r,v,x_1,v_1)E_1(x_1,v_1,x_2,v_2)G_0(x_2,v_2,x_3,\widehat{x_3-x_s})\\
    &\qquad\times
    \exp\left(-\beta[\mu_{t,0}](x_3,x_s)\right)h(\nu(x_s)\cdot\widehat{x_3-x_s})
    \dd x_1\dd x_2\dd x_3\dd v\dd v_1\dd v_2,
  \end{aligned}
\end{equation}
which is also linear in $\tmu$ through $E_1(x_1,v_1,x_2,v_2)$. Putting them together, the kernel of
the difference of the albedo operator $\Lambda -\Lambda_{0}$ can then approximated by
\begin{equation}\label{eq:def_d}
  \tlam(x_r,x_s) \equiv (\lambda-\lambda_{0})(x_r,x_s) \approx d(x_r, x_s) \equiv d_1(x_r, x_s) + d_2(x_r, x_s).
\end{equation}


\section{Neural networks for OT}\label{sec:nn}

The discussion below focuses on the two-dimensional case, i.e., $n=2$. For circular tomography
geometry, the domain $\Omega$ is a unit disk
\cite{arridge2009optical,bi2015image,tong2018rte,bal2019inverse}. As illustrated in
\cref{fig:domain}, the light sources are placed on the boundary equidistantly, while the receivers
are shifted by a half spacing. The forward problem of OT is to determinate all the outgoing
intensity on the receivers when the light source is activated one by one. The measured data is the
kernel $\lambda(x_r,x_s)$, where $x_s=(\cos(s),\sin(s))$ with $s=\frac{2\pi k}{N_s}$,
$k=0,\dots,N_s-1$ and $x_r=(\cos(r),\sin(r))$ with $r=\frac{(2j+1)\pi}{N_r}$, $j=0,\dots, N_r-1$,
where $N_s=N_r$ in the current setup. Both the absorption coefficient $\mu_a$ and the background
scattering coefficient $\mu_0$ are assumed to be known constants. The inverse problem of OT is to
recover the scattering coefficient $\mu$ in the domain given the observation data $\lambda(x_r, x_s)
- \lambda_0(x_r, x_s)$, where $\lambda_0(x_r, x_s)$ is the measurement data of the medium with
scattering coefficient to be $\mu_0$.

\begin{figure}[htb]
  \centering
  \includegraphics[width=0.40\textwidth]{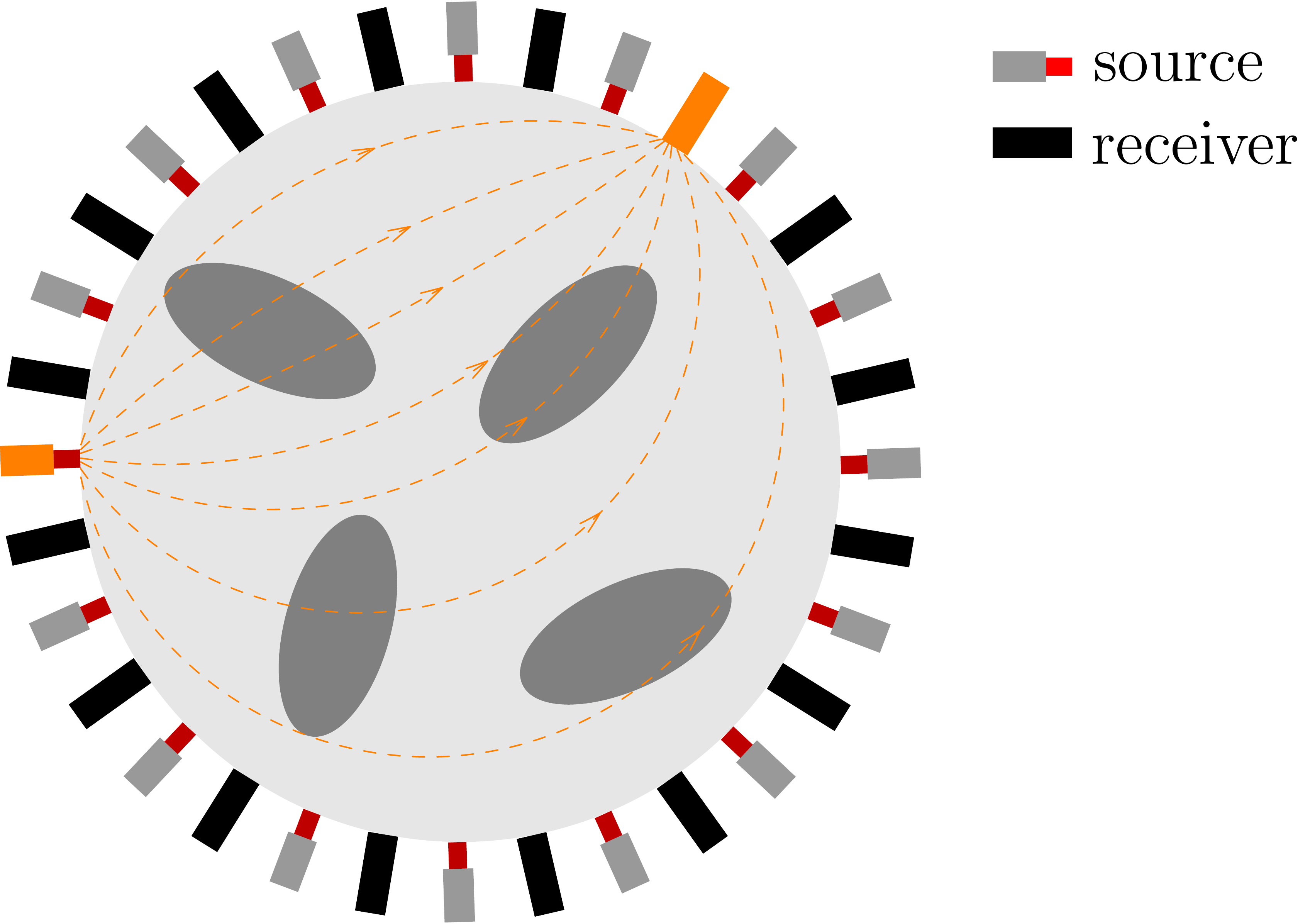}
  \label{fig:domain}
  \caption{Illustration of the problem setup. The domain is a unit disk and the
    light sources and the receivers are equidistantly placed on the boundary, with a half spacing
    shift in between.}
\end{figure}

\subsection{Forward problem of OT}
Since the domain $\Omega$ is a disk, it is convenient to write the problem in the polar coordinates.
Let $x_r=(\cos(r), \sin(r))$, $x_s=(\cos(s), \sin(s))$ and $x=(\rho\cos(\theta), \rho\sin(\theta))$,
where $\rho\in[0,1]$ denotes the radial direction and $r,s,\theta\in[0, 2\pi)$ denotes the
  angular direction.

\begin{figure}[htb]
    \centering
    \subfloat[$\lambda(x_r, x_s)$]{
      \includegraphics[height=0.13\textheight]{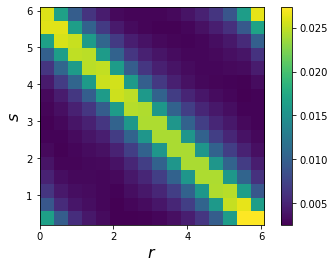}
    }
    \subfloat[$(\lambda-\lambda_0)(x_r, x_s)$]{
      \includegraphics[height=0.13\textheight]{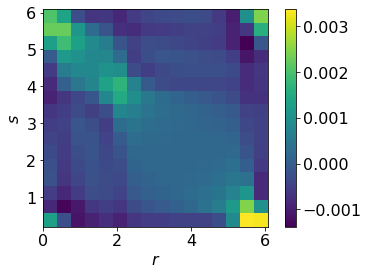}
    }
    \subfloat[$\lambda(x_h, x_s)$]{
      \includegraphics[height=0.13\textheight]{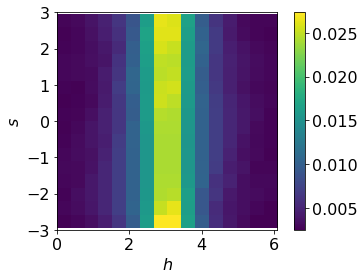}
    }
    \subfloat[$(\lambda-\lambda_0)(x_h, x_s)$]{
      \includegraphics[height=0.13\textheight]{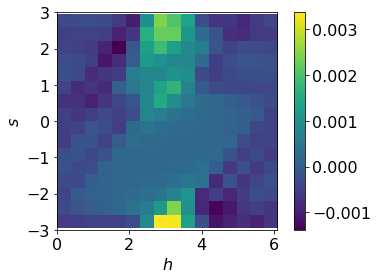}
    }
    \label{fig:measurement} 
    \caption{The uppers figures are the measurement data $\lambda(x_r, x_s)$
      and the difference $\lambda(x_r, x_s)-\lambda_0(x_r, x_s)\approx d(x_r, x_s)$ with respect to
      the background, respectively. The horizontal and vertical axes are $s$ and $r$,
      respectively. The lower figures are shift of their upper figures by $h=r-s$.}
\end{figure}

\paragraph{Convolution in the angular direction.}
\Cref{fig:measurement} presents an example of the measurement data $\lambda(x_r, x_s)$ and
$\lambda(x_r, x_s)-\lambda_0(x_r,x_s)$. Notice that the main signal concentrates upon the diagonal
part and the left-lower and right-upper corners. Due to the circular tomography geometry, it is
convenient to ``shear'' the measurement data by introducing a new angular variable $h = r - s$,
where the difference here is understood modulus $2\pi$. As we shall see, this shearing step
significantly simplifies the architecture of the NNs. Under the new parameterization, the measurement
data is
\begin{equation}
  d(h, s) \equiv d(x_{h+s}, x_s) = d( (\cos(s+h),\sin(s+h)), (\cos(s),\sin(s))).
\end{equation}
By also writing $\mu(\rho, \theta) \equiv \mu( (\rho\cos(\theta), \rho\sin(\theta)))$ in the polar
coordinates, the linear dependence of $d(h,s)$ on $\tmu$ in \eqref{eq:d1} and \eqref{eq:d2} states
that there exists a kernel distribution $K(h, s, \rho, \theta)$ such that
\begin{equation}
  d(h, s) = \int_{0}^1\int_{0}^{2\pi}K(h, s, \rho, \theta) \tmu(\rho,
  \theta)\dd\rho\dd\theta.
\end{equation}
The following proposition states that this can in fact be written as a convolution in the angular
direction.

\begin{proposition}\label{pro:convolution}
  There exists a function $\kappa(h, \rho, \cdot)$ periodic in the last argument such that 
  \begin{equation}\label{eq:convolution}
    K(h, s, \rho, \theta) = \kappa(h, \rho, s-\theta).
  \end{equation}
\end{proposition}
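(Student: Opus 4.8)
The plan is to exploit the rotational symmetry of the circular tomography geometry. Since the domain $\Omega$ is the unit disk and the background coefficients $\mu_0,\mu_a$ (hence $\mu_{t,0}$) are constants, the entire \emph{background} problem is invariant under rotations of the plane. I will make this precise at the operator level and then read off \eqref{eq:convolution} as the kernel-level consequence of this invariance. To this end I introduce, for each angle $\alpha$, the rotation $R_\alpha\in SO(2)$ acting both on points $x\in\Omega$ and on directions $v\in\bbS^1$, together with the induced pullback on fields, $(\mathcal{R}_\alpha\eta)(x)=\eta(R_{-\alpha}x)$.

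The core claim is that the three background operators commute with this action: $\mathcal{R}_\alpha\mS_0=\mS_0\mathcal{R}_\alpha$, $\mathcal{R}_\alpha\mL_0=\mL_0\mathcal{R}_\alpha$, and likewise for $\mJ_0$. For $\mS_0$ this follows from \eqref{eq:mS} because $\mu_0$ is constant and $\sigma(v'\cdot v)$ depends only on the rotation-invariant inner product $v'\cdot v$, combined with the invariance of the surface measure $\dd v$ on $\bbS^1$ under $R_\alpha$. For $\mL_0$ and $\mJ_0$ it follows from \eqref{eq:mL} and \eqref{eq:mJ} because $\mu_{t,0}$ is constant and the travel distance $t(x,v)$ in \eqref{eq:t} is unchanged when $(x,v)$ is rotated, precisely because the disk is rotation-invariant. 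Consequently $\mL_0\mS_0$ commutes with $\mathcal{R}_\alpha$, and so does its resolvent $\mG_0=(\mI-\mL_0\mS_0)^{-1}$ (term by term in the Neumann series $\sum_k(\mL_0\mS_0)^k$); at the level of kernels this is the equivariance $G_0(R_\alpha x,R_\alpha v,R_\alpha x',R_\alpha v')=G_0(x,v,x',v')$.

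Next I would track the first-order operator $\mE_1=\mL_1\mS_0+\mL_0\mS_1$. Because $\mL_1$ and $\mS_1$ are linear in $\tmu$ and built from the same rotation-covariant ingredients, one gets the covariance $\mathcal{R}_\alpha\,\mE_1[\tmu]=\mE_1[\mathcal{R}_\alpha\tmu]\,\mathcal{R}_\alpha$. Moreover the scalar factors entering \eqref{eq:d1} and \eqref{eq:d2} are all compatible with simultaneous rotation of their spatial and angular arguments: the weight $\nu(x_r)\cdot v$ and the half-space condition $\nu(x_r)\cdot v>0$ transform correctly since $\nu(R_\alpha x)=R_\alpha\nu(x)$ on the circle, the attenuation $\exp(-\beta[\mu_{t,0}])$ is invariant because $\mu_{t,0}$ is constant, and the illumination profile $h(\nu(\cdot)\cdot\,\cdot)$ depends only on a rotation-invariant inner product. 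Assembling these facts in \eqref{eq:d1} and \eqref{eq:d2} and changing variables $x\mapsto R_\alpha x$ (Jacobian one) in the interior integrals yields the covariance of the whole linearized measurement map: writing $d_{\tmu}(x_r,x_s)$ for the functional \eqref{eq:def_d}, one obtains $d_{\mathcal{R}_\alpha\tmu}(R_\alpha x_r,R_\alpha x_s)=d_{\tmu}(x_r,x_s)$ for every $\alpha$.

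Finally I would convert this covariance into a statement about $K$. Since $d(h,s)=\int_0^1\int_0^{2\pi}K(h,s,\rho,\theta)\tmu(\rho,\theta)\dd\rho\dd\theta$ and rotation by $\alpha$ sends the angles $(s,\;r=s+h,\;\theta)$ to $(s+\alpha,\;s+h+\alpha,\;\theta+\alpha)$ while fixing $h=r-s$ and $\rho$, testing the above covariance against arbitrary $\tmu$ forces
\[
K(h,s+\alpha,\rho,\theta+\alpha)=K(h,s,\rho,\theta)\qquad\text{for all }\alpha.
\]
A function of $(s,\theta)$ invariant under adding a common shift to both arguments depends only on $s-\theta$; taking $\alpha=-\theta$ and setting $\kappa(h,\rho,u):=K(h,u,\rho,0)$ gives $K(h,s,\rho,\theta)=\kappa(h,\rho,s-\theta)$, which is \eqref{eq:convolution}, and periodicity of $\kappa(h,\rho,\cdot)$ is inherited from the $2\pi$-periodicity of $K$ in $s$ and $\theta$. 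The main obstacle is the honest verification of the second and third steps: establishing the equivariance of the resolvent $\mG_0$ and correctly propagating the $\tmu$-dependence through $\mE_1$, together with the bookkeeping of measure-preservation under $R_\alpha$ in the angular and interior integrals. Everything after that is a change of variables.
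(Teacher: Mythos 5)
Your proposal is correct and follows essentially the same route as the paper's proof: both exploit the rotational symmetry of the disk with constant background coefficients, establish the equivariance $G_0(Rx,Rv,Rx',Rv')=G_0(x,v,x',v')$ (your Neumann-series commutation argument is exactly the justification behind the paper's reduction of $\mG_0$-invariance to invariance of the kernel $L_0$ of $\mL_0\mS_0$), propagate the $\tmu$-dependence through the terms \eqref{eq:d1} and \eqref{eq:d2} by a change of variables, and conclude that $K$ is invariant under a common angular shift of $s$ and $\theta$, hence a function of $s-\theta$.
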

The proof of this proposition uses some basic formulas summarized in the following lemma.
\begin{lemma}
  If $R\in\bbR^{2\times 2}$ is a rotation matrix, then 
  \begin{align}
      \label{eq:rotation_beta}
        \beta[\mu_0](Rx, Ry) &= \beta[\mu_0](x, y),\\
      \label{eq:rotation_G}
        G_0(x, v, x', v') &= G_0(Rx, Rv, Rx', Rv').
  \end{align}
\end{lemma}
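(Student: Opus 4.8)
The first identity \eqref{eq:rotation_beta} is immediate. Since $\mu_0$ is constant, the definition $\beta[\eta](x,y)=|x-y|\int_0^1\eta(x+\tau(y-x))\dd\tau$ reduces to $\beta[\mu_0](x,y)=\mu_0|x-y|$, and a rotation is an isometry, so $|Rx-Ry|=|R(x-y)|=|x-y|$ (the same argument applies verbatim to $\beta[\mu_{t,0}]$, which also appears in \eqref{eq:d1}--\eqref{eq:d2}). The identity \eqref{eq:rotation_G} for $G_0$ is the substantive part. My plan is to show that the background operator $\mG_0=(\mI-\mL_0\mS_0)^{-1}$ is equivariant under rotations and then translate this operator symmetry into the stated kernel identity.

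To that end I would introduce the rotation action on phase-space functions, $(U_R\Phi)(x,v)=\Phi(R^{-1}x,R^{-1}v)$, and verify that each building block of $\mG_0$ commutes with $U_R$. For the scattering operator $\mS_0\Phi(x,v)=\mu_0\int_{\bbS^{1}}\sigma(v'\cdot v)\Phi(x,v')\dd v'$, the substitution $w=Rv'$, together with the invariance of the inner product ($R^{-1}w\cdot R^{-1}v=w\cdot v$) and of the surface measure on $\bbS^1$ under rotation, yields $U_R\mS_0=\mS_0 U_R$; the only structural input is that $\sigma$ depends on $v'\cdot v$ alone. For the lifting operator $\mL_0 Q(x,v)=\int_0^{t(x,v)}\exp(-\mu_{t,0}\tau)Q(x-\tau v,v)\dd\tau$, the essential geometric fact is that the escape distance is rotation-invariant, $t(R^{-1}x,R^{-1}v)=t(x,v)$, which holds precisely because the domain is a disk and hence $R\Omega=\Omega$; combining this with the constancy of $\mu_{t,0}$ and the identity $R^{-1}x-\tau R^{-1}v=R^{-1}(x-\tau v)$ gives $U_R\mL_0=\mL_0 U_R$. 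Therefore $\mI-\mL_0\mS_0$ commutes with $U_R$, and so does its inverse $\mG_0$.

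It then remains to convert $U_R\mG_0=\mG_0 U_R$ into \eqref{eq:rotation_G}. Writing out both sides against the kernel $G_0$ and performing the substitution $x'=Rx''$, $v'=Rv''$ in the integral representing $\mG_0 U_R$---which has unit Jacobian on $\Omega$ since $\det R=1$ and preserves the surface measure on $\bbS^1$---produces the pointwise relation $G_0(R^{-1}x,R^{-1}v,x',v')=G_0(x,v,Rx',Rv')$, and replacing $(x,v)$ by $(Rx,Rv)$ gives exactly \eqref{eq:rotation_G}. The conceptual crux is the equivariance of $\mL_0$, which rests entirely on the rotational symmetry of the disk; the only delicate part of the remaining argument is the bookkeeping of the rotation directions ($R$ versus $R^{-1}$) and the verification of measure-invariance on the sphere, both of which are routine once the equivariance of the two building blocks is established.
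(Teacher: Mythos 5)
Your proof is correct, but it takes a genuinely different route from the paper's. The paper works at the kernel level: it writes out the distribution kernel of the composite operator $\mL_0\mS_0$ explicitly,
\begin{equation*}
  L_0(x,v,x',v') = \frac{\delta(v-\widehat{x-x'})}{|x-x'|}
  \exp\left(-\beta[\mu_{t,0}](x,x')\right)\mu_0\,\sigma(v\cdot v'),
\end{equation*}
checks pointwise that $L_0(Rx,Rv,Rx',Rv')=L_0(x,v,x',v')$ using the same three ingredients you use (rotation invariance of the delta function, of the inner product, and of $\beta[\mu_{t,0}]$), and then asserts that invariance of $L_0$ suffices for invariance of $G_0$ because $\mG_0=(\mI-\mL_0\mS_0)^{-1}$. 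You instead never compute a kernel: you prove operator equivariance $U_R\mS_0=\mS_0U_R$ and $U_R\mL_0=\mL_0U_R$ separately, pass it to the inverse, and only at the end convert commutation into the kernel identity by a change of variables. Each approach buys something. The paper's explicit formula for $L_0$ is not wasted work --- it is reused immediately in the proof of Proposition~\ref{pro:convolution}, where the analogous kernel $E_1$ of the perturbed operator is needed in closed form. Your approach, on the other hand, makes rigorous precisely the step the paper leaves implicit (why invariance of the kernel of $\mL_0\mS_0$ transfers to the kernel of its \emph{inverse}: your identity $U_R\mG_0=\mG_0U_R$ follows formally from $U_R(\mI-\mL_0\mS_0)=(\mI-\mL_0\mS_0)U_R$, whereas the paper's ``we just need to check $L_0$'' tacitly invokes a Neumann-series or change-of-variables argument); it also avoids the delta-function and Jacobian bookkeeping, which is dimension-dependent, and it isolates cleanly where the geometry of the disk enters, namely through $t(R^{-1}x,R^{-1}v)=t(x,v)$, i.e.\ $R\Omega=\Omega$ --- a hypothesis the paper's pointwise check uses only implicitly (the domain of integration in the composition must itself be rotation invariant). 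One small point of bookkeeping in your final step: the substitution $x'=Rx''$ over $\Omega$ likewise relies on $R\Omega=\Omega$, not only on $\det R=1$, so it is worth stating that there as well.
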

\begin{proof}
  The definition of $\beta$ indicates $\beta[\mu_0](x,y) = \mu_0|x-y|$. Thus \cref{eq:rotation_beta}
  holds.

  Denote the distribution kernel of $\mL_0\mS_0$ by $L_0(x, v, x', v')$.
  Since $\mG_0 = (\mI - \mL_0\mS_0)^{-1}$, we just need to check $L_0(Rx, Rv, Rx', Rv') 
  = L_0(x, v, x', v')$. 
  Notice that the operator $\mL\mS$ is defined as 
  \begin{equation}\label{eq:LS}
    \begin{aligned}
      \mL\mS Q(x, v) &= \int_0^t\exp\left( -\beta[\mu_t](x, x-\tau v) \right)\mu(x-\tau v)
      \int_{\bbS^{1}}\sigma(v\cdot v')Q(x-\tau v, v')\dd v' \dd \tau\\
      &= \int_{\Omega}\int_{\bbS^1}\frac{\delta(v-\widehat{x-x'})}{|x-x'|}
      \exp\left( -\beta[\mu_t](x,x') \right)\mu(x')\sigma(v\cdot v')Q(x', v')\dd x'\dd v'.
    \end{aligned}
  \end{equation}
  Then the kernel $L_0$ reads 
  \begin{equation*}
    L_0(x, v, x', v') = 
    \frac{\delta(v-\widehat{x-x'})}{|x-x'|}
    \exp\left( -\beta[\mu_{t,0}](x,x') \right)\mu_0\sigma(v\cdot v').
  \end{equation*}
  Since $\delta(Rv - R(\widehat{x-x'})) = \delta(v-\widehat{x-x'})$, 
  $Rv \cdot Rv'=v\cdot v'$ and \cref{eq:rotation_beta}, one can directly obtain
  \begin{equation*}
    L_0(Rx, Rv, Rx', Rv') =  L_0(x, v, x', v').
  \end{equation*}
  This completes the proof.
\end{proof}

\begin{proof}[Proof of \cref{pro:convolution}]
  To prove \cref{eq:convolution}, one needs to show that, for any $\rho\in[0,1)$ and any
  $h\in[-\pi,\pi)$ and $s, \psi\in[0, 2\pi)$,
  \begin{equation}\label{eq:d}
    d(h, s+\psi) = \int_{0}^1\int_{0}^{2\pi}K(h, s, \rho, \theta) 
    \tmu(\rho,\theta+\psi)\dd\rho\dd\theta
  \end{equation}
  holds. Notice \cref{eq:def_d} that $d$ has two parts. We study them one by one.
  Define the rotation matrix
  $R = \begin{pmatrix} \cos(\psi) & -\sin(\psi)\\ \sin(\psi) & \cos(\psi) \end{pmatrix}$, then
  \begin{equation*}
    \begin{aligned}
      d_1(h, s+\psi) = d_1(Rx_r, Rx_s) &=
      -\int_{\nu(R x_r)\cdot v>0}\hspace{-30pt}\nu(R x_r)\cdot v
      \int_{\Omega} G_0\left(R x_r, v, x, \widehat{x-Rx_s}\right) \\
      &\qquad \times
      \exp\left(-\beta[\mu_{t,0}](Rx_s, x) \right)
      \beta[\tmu](Rx_s, x)
      h(\nu(R x_s)\cdot \widehat{x-Rx_s} )\dd v\dd x.
    \end{aligned}
  \end{equation*}
  Since $\Omega$ is a disk, the integral keeps unchanged if we change of variables as $v\to R v$ and
  $x\to R x$.  Using $\nu(Rx_r)\cdot Rv=\nu(x_r)\cdot v$, \cref{eq:rotation_G} and
  \cref{eq:rotation_beta} to eliminate the rotation and changing the variable again as $R v\to v$
  and $R x\to x$, we obtain 
  \begin{equation*}
    \begin{aligned}
      d_1(h, s+\psi) = d_1(Rx_r, Rx_s) &=
      -\int_{\nu(x_r)\cdot v>0}\hspace{-30pt}\nu(x_r)\cdot v
      \int_{\Omega} G_0\left(x_r, v, x, \widehat{x-x_s}\right) \\
      &\qquad \times
      \exp\left(-\beta[\mu_{t,0}](x_s, x) \right)
      \beta[\tmu](Rx_s, Rx) h(\nu(x_s)\cdot \widehat{x-x_s} )\dd v\dd x.
    \end{aligned}
  \end{equation*}
  This completes the proof of the $d_1$ part.

  Next we study the second part $d_2$. Noticing \cref{eq:LS}, we obtain the kernel distribution
  $E_1(x, v, x', v')$ 
  \begin{equation*}
      E(x, v, x', v') = 
      \frac{\delta(v-\widehat{x-x'})}{|x-x'|}
      \exp\left( -\beta[\mu_{t,0}](x,x') \right)\sigma(v\cdot v')
      \left(-\beta[\tmu](x, x') + \tmu(x') \right).
  \end{equation*}
  Using \cref{eq:rotation_beta}, we have
  \begin{equation*}
      E_1(Rx, Rv, Rx', Rv') = 
      \frac{\delta(v-\widehat{x-x'})}{|x-x'|}
      \exp\left( -\beta[\mu_{t,0}](x,x') \right)\sigma(v\cdot v')
      \left(-\beta[\tmu](R x, R x') + \tmu(R x') \right).
  \end{equation*}
  Then using the same technique in the proof of the first part, we can show that \cref{eq:d} also 
  holds for the second part. This completes the proof.
\end{proof}

\cref{pro:convolution} shows that $K$ acts on $\tmu$ in the angular direction by a convolution,
i.e.,
\begin{equation}\label{eq:convolution_d}
  d(h, s) = \int_0^1(\kappa(h, \rho,\cdot) * \tmu(\rho, \cdot))(s)\dd\rho.
\end{equation}
This effectively reduces the forward map to a family of 1D convolutions, parameterized by $\rho$ and
$h$.

Till now all the analysis is in the continuous space. One can apply a discretization on the RTE
\cref{eq:rte} by the finite volume method on the space and discrete velocity method on the direction
domain \cite{gao2009fast}. The kernel distribution $G_0$ and $E$ are replaced by its discrete
version. The actual discretization is often problem-dependent and we leave it to \cref{sec:num}.
Here with a slight abuse of notation, we use the same letters to denote the continuous kernels,
variables and their discretization. Then the discretization version of \cref{eq:convolution_d} is
\begin{equation}\label{eq:discrete_d}
  d(h, s) \approx \sum_{\rho}(\kappa(h,\rho, \cdot) * \tmu(\rho, \cdot))(s).
\end{equation}


\paragraph{Neural network architecture.}
The perturbative analysis shows that if $\tmu$ is sufficiently small, the forward map
$\tmu(\rho,\theta)\to \tlam(h, s)$ can be approximated by \cref{eq:discrete_d}. This indicates that
the forward map \cref{eq:discrete_d} can be approximated by a convolution layer for small
$\tmu$. For larger $\tmu$, this linear approximation is no longer accurate. In order to extend the
neural network for \cref{eq:discrete_d} to the nonlinear case, we propose to increase the number of
convolution layers and include nonlinear activation functions, as shown in \cref{alg:forward}.  Here
$\Convone[\alpha, w, \relu]$ stands for a one-dimensional layer with channel number $\alpha$, window
size $w$, and activation function as $\relu$. Note that because the value of the measurement data
ranges in $\bbR$, no activation function is applied after the last layer. Since the convolution in
\cref{eq:discrete_d} is global, the architectural parameters are chosen with
\begin{equation}
  w N_{\cnn} \geq N_s
\end{equation}
so that the resulting network is capable of capturing global interactions. When $N_s$ is large, it
is possible that the recently proposed multiscale neural networks, for example MNN-$\mathcal{H}$-net
\cite{fan2018mnn}, MNN-$\mathcal{H}^2$-net \cite{fan2018mnnh2}, and BCR-net \cite{fan2019bcr}, are
more efficient for such global interactions. However in order to simplify the presentation, the
discussion here sticks to the convolutional layers.

\begin{algorithm}[htb]
  \begin{small}
    \begin{center}
      \begin{algorithmic}[1]
        \Require $\alpha$, $w$, $N_{\cnn}\in\bbN^+$, $\tmu\in\bbR^{N_{\rho}\times N_{\theta}}$
        \Ensure $\tlam \in\bbR^{N_h\times N_s}$
        \State $\xi^{(0)} = \tmu$ with $\rho$ as the channel direction
        \For {$k$ from $1$ to $N_{\cnn}-1$ by $1$}
        \State $\xi^{(k)} \leftarrow \Convone[\alpha, w, \relu](\xi^{(k-1)})$
        \EndFor
        \State $\tlam \leftarrow \Convone[N_{h}, w, \id](\xi^{(N_{\cnn}-1)})$
        \State \sffont{return} $\tlam$
      \end{algorithmic}
    \end{center}
  \end{small}
  \caption{\label{alg:forward} Neural network architecture for the forward problem
  $\tmu\to \tlam$.}
\end{algorithm}

\subsection{Inverse problem of OT}
The perturbative analysis shows that if $\tmu$ is sufficiently small, the forward map can
be approximated by 
\begin{equation}
    \tlam \approx  K \tmu,
\end{equation}
which is the operator notation of the discretization \cref{eq:discrete_d}. Here $\tmu$ is a vector
indexed by $(\rho, \theta)$, $\tlam$ is a vector indexed by $(h, s)$, and $K$ is a matrix with row
indexed by $(h, s)$ and column indexed by $(\rho, \theta)$. The filtered back-projection method
\cite{feng2007levenberg} suggests the following formula to recover $\tmu$:
\begin{equation}
  \tmu\approx (K^\T K + \epsilon I)^{-1} K^\T \tlam.
\end{equation}

Since $K^\T\tlam$ can also be written as a family of convolutions
\begin{equation}
  (K^\T \tlam)(\rho,\theta) = \sum_{h} (\kappa(h,\rho, \cdot) * \tlam(h,\cdot))(\theta),
\end{equation}
the application of $K^\T$ to $\tlam$ can be approximated with a one-dimensional convolutional neural
network, similar to $K$. For the part $K^\T K + \epsilon I$, which can be viewed as a
post-processing in the $(\rho,\theta)$ space, we implement this with several two-dimensional
convolutional layers for simplicity. However, for problems with larger sizes, multiscale neural
networks such as \cite{fan2018mnn,fan2018mnnh2,fan2019bcr} can be also used. The resulting
architecture for the inverse map is summarized in \cref{alg:inverse} and illustrated in
\cref{fig:inverse}

\begin{algorithm}[htb]
  \begin{small}
    \begin{center}
      \begin{algorithmic}[1]
        \Require $\alpha_1, \alpha_2$, $w_1, w_2$, $N_{\cnn_1}$, $N_{\cnn_2}\in\bbN^+$, 
        $\tlam\in\bbR^{N_h\times N_s}$
        \Ensure $\tmu\in\bbR^{N_{\rho}\times N_{\theta}}$
        \State $\zeta^{(0)} = \tlam$ with $h$ as the channel direction
        \For {$k$ from $1$ to $N_{\cnn_1}$ by $1$}
        \State $\zeta^{(k)} \leftarrow \Convone[\alpha_1, w_1, \relu](\zeta^{(k-1)})$
        \EndFor
        \State $\xi^{(0)} \leftarrow \zeta^{(N_{\cnn_1})}$
        \For {$k$ from $1$ to $N_{\cnn_2}-1$ by $1$}
        \State $\xi^{(k)} \leftarrow \Convtwo[\alpha_2, w_2, \relu](\xi^{(k-1)})$
        \EndFor
        \State $\tmu \leftarrow \Convtwo[1, w_2, \id](\xi^{(N_{\cnn_2}-1)})$
        \State \sffont{return} $\tmu$
      \end{algorithmic}
    \end{center}
  \end{small}
  \caption{\label{alg:inverse} Neural network architecture for the inverse problem
  $\tlam \to \tmu$.}
\end{algorithm}

\begin{figure}[htb]
  \centering
  \includegraphics[width=\textwidth]{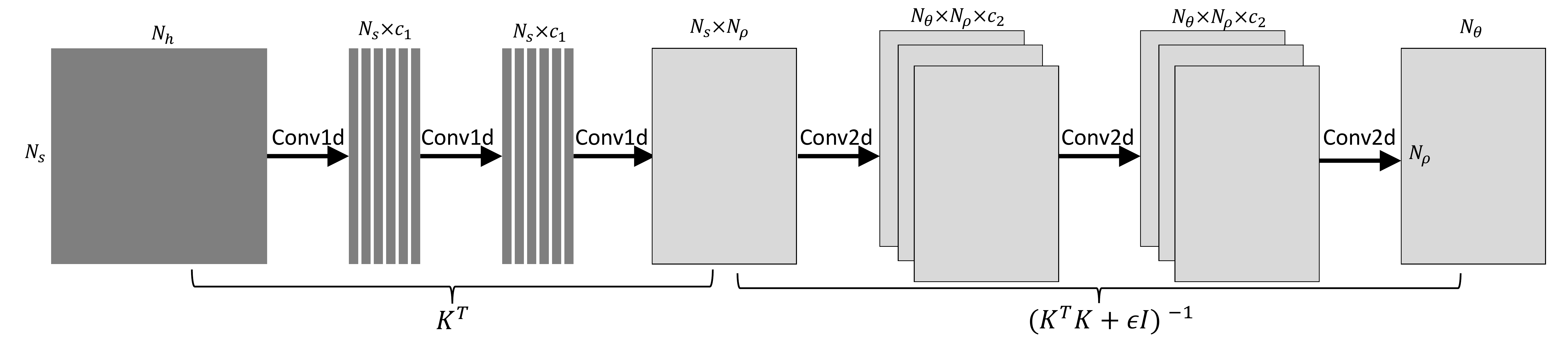}
  \label{fig:inverse} 
  \caption{Neural network architecture for the inverse map of OT.}
\end{figure}

\section{Numerical tests} \label{sec:num}

This section reports the numerical performance of the proposed neural network architectures for the
forward and inverse maps.


\subsection{Experimental setup}
The RTE in \cref{eq:rte} is discretized with a finite volume method in $x$ and a discrete velocity
method in $v$. The upwind scheme is used for the convection term and the composite trapezoidal rule
is applied for the integral of the scattering term. The value of $\sigma(v'\cdot v)$ is replaced by
its value on the discretization points with a scaling such that its numerical quadrature is $1$. The
multi-level method proposed in \cite{gao2009fast} is adopted to solve the discrete system. The
domain $\Omega$ is partitioned by triangle mesh with $6976$ elements and $3553$ points. The
direction $v$ is uniformly discretized using $32$ points. In the polar coordinates, the domain
$(\rho, \theta)\in [0,1]\times [0,2\pi)$ is partitioned by a uniformly Cartesian mesh with $96\times
192$ points. As a technical note, since \cref{alg:forward,alg:inverse} are designed for the
scattering coefficient in the polar coordinates, the scattering coefficient on the triangle mesh
is treated as a piece-constant function and it is further interpolated on to the polar grid.

To mimic the setup of realistic medical applications, $\Omega$ is a disc with the radius equal to
$20\mathrm{mm}$ and the background scattering and absorption coefficient are $1 \mathrm{mm}^{-1}$
and $0.01 \mathrm{mm}^{-1}$, respectively \cite{gao2010multilevel, bi2015image, tong2018rte}. 
The parameter $g$ in \cref{eq:HG} is set as $g=0.9$, a typical value of biological tissues.
In the experiment, $N_s=16$ light sources and $N_r=16$ receivers are equidistantly placed on the
boundary of the domain with a half spacing shift (see \cref{fig:domain}). The source light is an
angular independent pointolite, i.e., the $s$-th light source is $F(x, v) = \delta(x-x_s)$.

The NN is implemented with Keras \cite{keras} running on top of TensorFlow \cite{tensorflow}. Nadam
is chosen as the optimizer \cite{dozat2015incorporating} and the mean squared error is used as the
loss function. The parameters of the network are initialized by Xavier initialization
\cite{glorot2010understanding}.
In the training process, the batch size and the learning rate is firstly set as $16$ and $10^{-3}$
respectively, and the NN is trained $100$ epochs. Then we increase the batch size by a factor $2$
till to $256$ with the learning rate unchanged, and then decrease the learning rate by a factor
$10^{1/2}$ to $10^{-5}$ with the batch size fixed as $256$.  In each step, we train the NN $50$
epochs. The selection of the channel number $\alpha$, number of convolution layers $N_{\cnn}$ and
the window size $w$ will be discussed in the numerical results.

\subsection{Numerical results}

For a fixed scattering coefficient field $\mu$, $\lambda (h, s) =\lambda ((\cos(s+h),\sin(s+h)),
(\cos(s),\sin(s)))$ stands for the \emph{exact} measurement data solved by numerical discretization
of \cref{eq:rte}. The prediction of the forward NN from $\mu$ is denoted by $\lambda^{\NN}$, while
the one of the inverse NN from $\lambda$ is denoted by $\mu^{\NN}$. The accuracy for the forward
problem is measured by the relative error in the $\ell^2$ norm:
\begin{equation}\label{eq:relativeerror}
  \frac{\|\lambda -\lambda ^{\NN}\|_{\ell^2}}{\|\lambda \|_{\ell^2}}.
\end{equation}
For each experiment, the test error is then obtained by averaging \cref{eq:relativeerror} over a
given set of test samples. The numerical results presented below are obtained by repeating the
training process three times, using different random seeds for the NN initialization.

The scattering coefficient $\mu(x)$ is assumed to be piecewise constant. For each sample $\mu(x)$,
we randomly generate $N_{e}$ ellipses in $\Omega$ and set $\mu(x)=2 \mathrm{mm}^{-1}$ in the
ellipses and $1 \mathrm{mm}^{-1}$ otherwise. For each ellipse, the width and height are sampled from
the uniform distributions $\mU(0.0075, 0.015)$ and $\mU(0.00375, 0.0075)$, respectively, the
direction is uniformly random over the unit circle, and the position is uniformly sampled in the
disk. It is also required that each ellipse lies in the disk and there is no intersection between
each two ellipses. For each test, $10,204$ samples $\{(\mu_i,\lambda_i)\}$ are generated with $8192$
used for training and the remaining 2048 for testing.

While \cref{alg:forward,alg:inverse} assume for simplicity that $N_{\theta}=N_s$, this is often not
the case in the experimental setup. To deal with this issue, for the forward problem we first
compress $\mu$ from $N_{r}\times N_{\theta}$ to $\alpha\times N_{s}$ by a one-dimensional
convolution layer with channel number $\alpha$, window size $N_{r}/N_s$, and strides $N_{r}/N_s$.
For the inverse problem, an interpolation operator for extending the data of size $\alpha\times N_s$
to $N_r\times N_{\theta}$ is added after the one-dimensional convolution neural networks.  In the
implementation, the interpolation is implemented by two layers. The first layer interpolates the
data of size $\alpha\times N_{s}$ along with the angular direction to $\alpha\times N_{\theta}$ by a
one-dimensional convolution layer with channel number $\alpha\times N_{\theta} / N_s$ and window
size $1$, and a column major reshape.  The second layer interpolates the data of size $\alpha\times
N_{\theta}$ along with the radial direction to $N_r\times N_{\theta}$ by a convolution layer with
channel number $N_r$ and window size $1$.

\paragraph{Forward problem.}
The data set is generated with the number of ellipses $N_e=4$ and the window size $w$ in
\cref{alg:forward} is set to be $5$. Multiple numerical experiments are performed to study how the
test error depends on the channel number $\alpha$ and the convolution layer number $N_{\cnn}$, with
the results presented in \cref{fig:fwderr}. As the number of channels increases, the test error
first consistently decreases and then saturates. The same is observed for the number of convolution
layers. The choices of the hyper-parameters $\alpha=32$ and $N_{\cnn}=8$ offers a reasonable balance
between accuracy and efficiency. For this specific case, the number of parameters is $7.5\times
10^4$ and the test error is $1.1\times 10^{-3}$. \Cref{fig:Fnus4} illustrates the NN prediction
$\lambda ^{\NN}$ and its corresponding references $\lambda $ of a sample in the test data.

\begin{figure}[htb]
  \centering
  \includegraphics[width=0.4\textwidth]{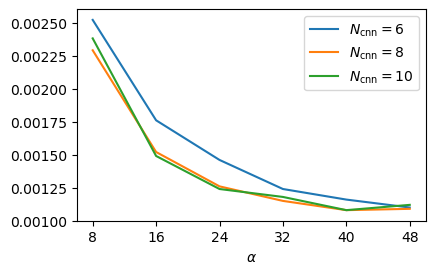}
  \label{fig:fwderr} 
  \caption{The test error for different channel numbers $\alpha$ and different
    convolution layer numbers $N_{\cnn}$.}
\end{figure}

\begin{figure}[htb]
  \centering 
  \subfloat[$\lambda (h, s)$]{
    \includegraphics[width=0.25\textwidth]{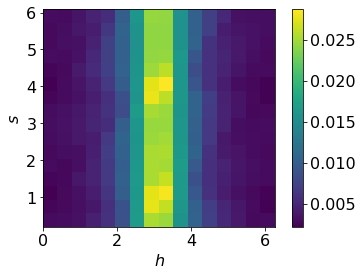}
  }
  \subfloat[$\lambda ^{\NN}(h, s)$]{
    \includegraphics[width=0.25\textwidth]{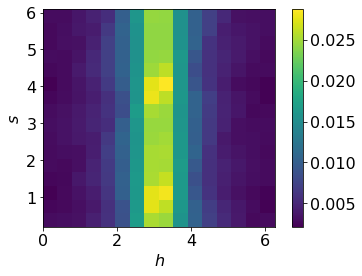}
  }
  \subfloat[$\lambda (h, s)-\lambda_{0}(h,s)$]{
    \includegraphics[width=0.25\textwidth]{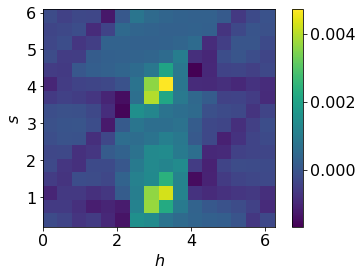}
  }
  \subfloat[$\lambda ^{\NN}(h, s)-\lambda_{0}(h,s)$]{
    \includegraphics[width=0.25\textwidth]{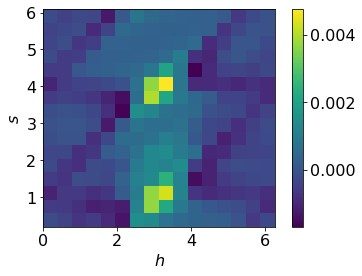}
  }
  \label{fig:Fnus4}
  \caption{Illustration of a sample in the test data for the forward problem with
    the number of ellipses $N_e=4$ in $\Omega$. The channel number $\alpha=32$ and the convolution
  layer number $N_{\cnn}=8$.}
\end{figure}

\paragraph{Inverse problem.}
Two date sets corresponding to $N_e=2,4$ are generated. The hyper-parameters in \cref{alg:inverse}
are set as $(\alpha_1=32, w_1=5, N_{\cnn_1}=6)$ and $(\alpha_2=4, w_2=3\times 3, N_{\cnn_2}=5)$ and 
the number of parameters in NN is $4.8\times 10^4$. To model the uncertainty in the measurement
data, we introduce noises to the albedo operator in the data set by defining
$\lambda_i^\delta \equiv (1+ Z_i \delta)\lambda_i$, where $Z_i$ is a Gaussian random variable with
zero mean and unity variation and $\delta$ controls the signal-to-noise ratio. For each noisy level
$\delta=0$, $0.5\%$, $1\%$, $2\%$ and $5\%$, an independent NN is trained and tested with the noisy
data set $\{(\lambda_i^\delta,\mu_i)\}$. Note that in our experiments the mean of
$\frac{\|\lambda-\lambda_{0}\|}{\lambda }$ for all the samples is about $5\%$ and hence the
signal-to-noise ratio for the difference $\lambda-\lambda_0$ is almost $100\%$ when the noise level
$\delta=5\%$.

\begin{figure}[h!]
    \centering
    \subfloat[Reference $\mu$]{
        \includegraphics[width=0.25\textwidth]{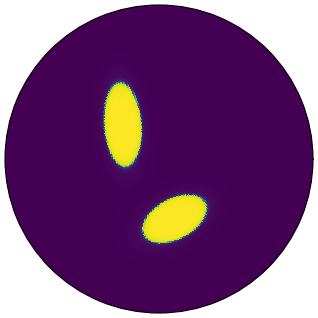}
    }\hspace{0.05\textwidth}
    \subfloat[$\mu^{\NN}$ with $\delta=0$]{
        \includegraphics[width=0.25\textwidth]{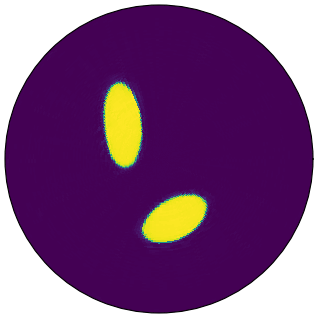}
    }\hspace{0.05\textwidth}
    \subfloat[$\mu^{\NN}$ with $\delta=0.5\%$]{
        \includegraphics[width=0.25\textwidth]{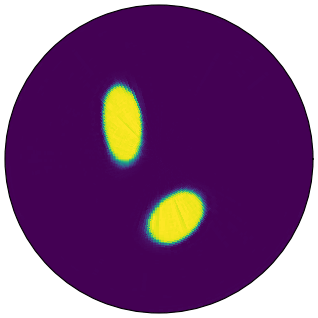}
    }\\
    \subfloat[$\mu^{\NN}$ with $\delta=1\%$]{
        \includegraphics[width=0.25\textwidth]{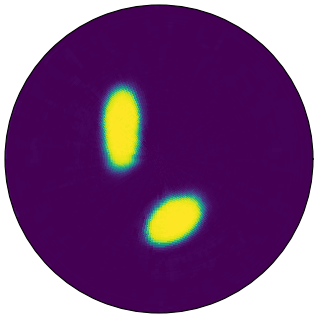}
    }\hspace{0.05\textwidth}
    \subfloat[$\mu^{\NN}$ with $\delta=2\%$]{
        \includegraphics[width=0.25\textwidth]{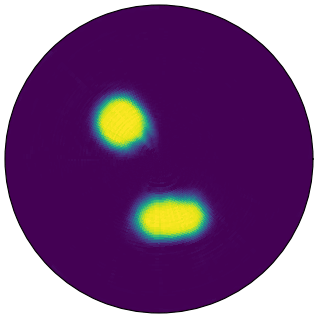}
    }\hspace{0.05\textwidth}
    \subfloat[$\mu^{\NN}$ with $\delta=5\%$]{
        \includegraphics[width=0.25\textwidth]{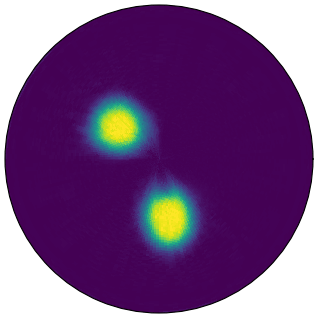}
    }
    \caption{\label{fig:nus2g9} NN prediction of a sample in the test data for the number of
      ellipses $N_e=2$ in $\Omega$ and for different noise level $\delta=0, 0.5\%, 1\%, 2\%, 5\%$.}
\end{figure}

\begin{figure}[h!]
    \centering
    \subfloat[Reference $\mu$]{
        \includegraphics[width=0.25\textwidth]{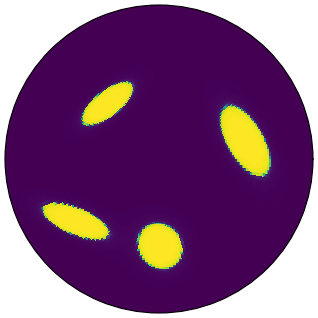}
    }\hspace{0.05\textwidth}
    \subfloat[$\mu^{\NN}$ with $\delta=0$]{
        \includegraphics[width=0.25\textwidth]{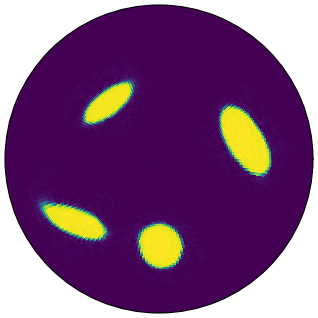}
    }\hspace{0.05\textwidth}
    \subfloat[$\mu^{\NN}$ with $\delta=0.5\%$]{
        \includegraphics[width=0.25\textwidth]{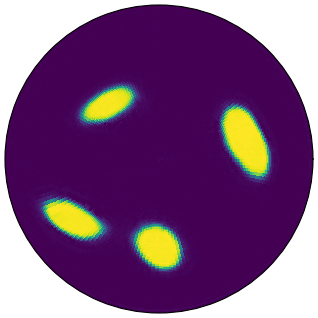}
    }\\
    \subfloat[$\mu^{\NN}$ with $\delta=1\%$]{
        \includegraphics[width=0.25\textwidth]{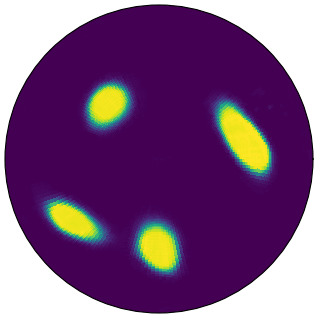}
    }\hspace{0.05\textwidth}
    \subfloat[$\mu^{\NN}$ with $\delta=2\%$]{
        \includegraphics[width=0.25\textwidth]{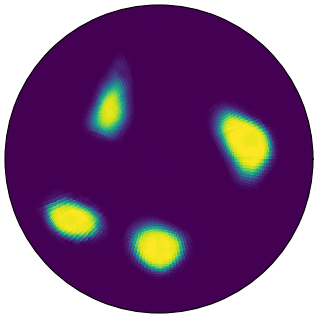}
    }\hspace{0.05\textwidth}
    \subfloat[$\mu^{\NN}$ with $\delta=5\%$]{
        \includegraphics[width=0.25\textwidth]{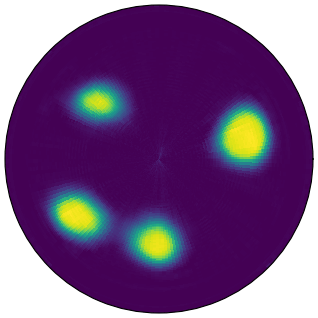}
    }
    \caption{\label{fig:nus4g9} NN prediction of a sample in the test data for the number of
      ellipses $N_e=4$ in $\Omega$ and for different noise level $\delta=0, 0.5\%, 1\%, 2\%, 5\%$.}
\end{figure}

\Cref{fig:nus2g9,fig:nus4g9} show samples in the test data for different noise level $\delta$ and
different number of ellipses $N_e$ in $\Omega$. When there is no noise in the measurement data, the
NN offers an accurate prediction of the scattering coefficient $\mu$, in the position, shape and
direction of the ellipses. For the small noise levels, for example $\delta=0.5\%$ and $1\%$, the
boundary of the shapes in the prediction is blurred while the position and direction of the ellipses
are still correct.  As the noise level $\delta$ increases, the shapes become fuzzy but the position
and number of shapes are still correctly predicted. This demonstrates the NN architecture in
\cref{alg:inverse} is capable of learning the inverse problem of OT.

To test the generalization performance of the NN, we train the NN using the data set of $N_e=2$ at a
given noise level and test the NN by the data of $N_e = 4$ with the same noise level (and vice
versa). The results, summarized in \cref{fig:generation}, indicate that the NN trained by the data,
with two inclusions is capable of recovering the measurement data of the case with four inclusions,
and vice versa. This is an indication that the trained NN is capable of predicting beyond the
training scenario.

\begin{figure}[htb]
    \centering
    \subfloat[Reference $\mu$ for $N_e=2$]{
        \includegraphics[width=0.20\textwidth]{Nus2G9Al32Ns0S5Y.png}
    }\hspace{0.02\textwidth}
    \subfloat[$\mu^{\NN}$ with $\delta=0$]{
        \includegraphics[width=0.20\textwidth]{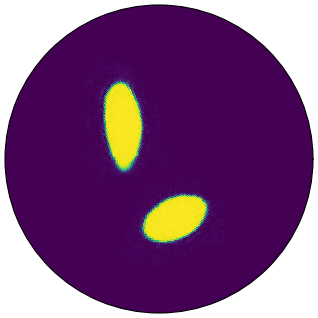}
    }\hspace{0.02\textwidth}
    \subfloat[$\mu^{\NN}$ with $\delta=0.5\%$]{
        \includegraphics[width=0.20\textwidth]{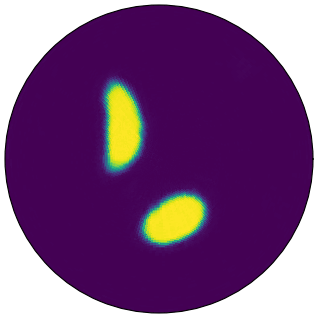}
    }\hspace{0.02\textwidth}
    \subfloat[$\mu^{\NN}$ with $\delta=1\%$]{
        \includegraphics[width=0.20\textwidth]{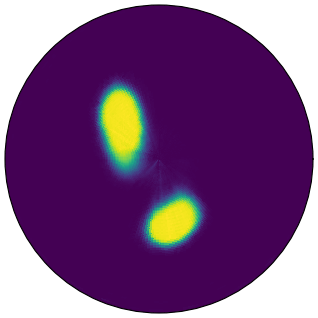}
    }\\
    \subfloat[Reference $\mu$ for $N_e=4$]{
        \includegraphics[width=0.20\textwidth]{Nus4G9Al32Ns0S7Y.png}
    }\hspace{0.02\textwidth}
    \subfloat[$\mu^{\NN}$ with $\delta=0\%$]{
        \includegraphics[width=0.20\textwidth]{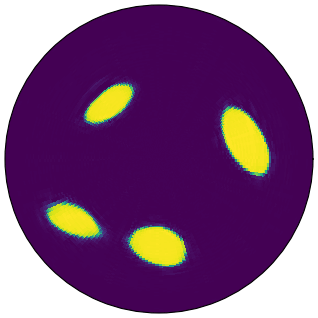}
    }\hspace{0.02\textwidth}
    \subfloat[$\mu^{\NN}$ with $\delta=0.5\%$]{
        \includegraphics[width=0.20\textwidth]{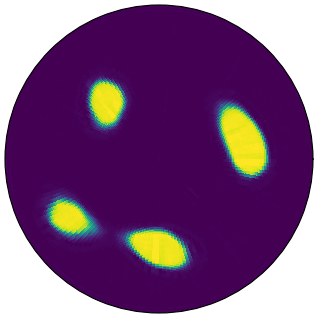}
    }
    \subfloat[$\mu^{\NN}$ with $\delta=1\%$]{
        \includegraphics[width=0.20\textwidth]{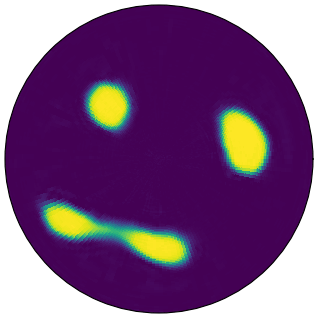}
    }
    \caption{\label{fig:generation}NN generation test. The upper figures: the NN is trained by the
      data of the number of ellipses $N_e=4$ in $\Omega$ with noise level $\delta=0$ $0.5\%$ or
      $1\%$ and test by the data of $N_e=2$ with the same noise level. The lower figures correspond
      the case of training data $N_e=2$ and test data $N_e=4$.}
\end{figure}

\section{Discussions}\label{sec:conclusion}

This paper presents a neural network approach for OT problems.  Mathematically, these NNs
approximate the forward and inverse maps between the scattering coefficient and the kernel
distribution of the albedo operator. The perturbative analysis, which indicates that the linearized
forward map can be represented by a one-dimensional convolution with multiple channels, inspires the
design of the NN architectures.

NNs have offered a few clear advantages in approximating the forward and inverse problems. For both
the forward and inverse maps, once the NN is trained, applying the map is significantly accelerated
as it only involves a single inference with the trained NN. For the inverse problem, two critical
issues for more traditional approaches are the choices of the solution algorithm and the
regularization term. NNs seem to bypass the algorithm issue by choosing an appropriate architecture
and learning the map from the data, and at the same time, identify an appropriate regularization by
automatically learning the key features from the training set.  Numerical results also demonstrate
that the proposed NNs are capable of approximating the forward and inverse maps accurately.
However, although empirically encouraging, theoretical justification of these advantages require
significant work.


The discussion in this paper focuses on the reconstruction of the scattering coefficient. Using a
similar analysis, one can extend the work to the reconstruction of the absorption coefficient or
both. The analysis in this paper can also extended to the three-dimensional OT problems by
leveraging recent work such as \cite{s.2018spherical}.


\section*{Acknowledgments}
The work of Y.F. and L.Y. is partially supported by the U.S. Department of Energy, Office of
Science, Office of Advanced Scientific Computing Research, Scientific Discovery through Advanced
Computing (SciDAC) program. The work of L.Y. is also partially supported by the National Science
Foundation under award DMS-1818449. This work is also supported AWS Cloud Credits for Research
program from Amazon.

\bibliographystyle{abbrv}
\bibliography{nn}
\end{document}